\theoremstyle{definition}
\newtheorem{definition}{Definition}[section]
 \newtheorem{proposition}{Proposition}[section]
\journal{Journal Name}
\begin{document}
\setlength{\parindent}{0pt}
\begin{frontmatter}

\title{The $k$-interchange-constrained diameter of a transit network: A connectedness indicator that accounts for travel convenience}

%\titlerunning{Short form of title}        % if too long for running head

\author{Nassim Dehouche}
\address{Business Administration Division,\\
Mahidol University International College\\
Salaya, 73170, Thailand\\
Nassim.deh@mahidol.ac.th }
\begin{abstract}
We study two variants of the shortest path problem. Given an integer $k$, the \textit{$k$-color-constrained} and the \textit{$k$-interchange-constrained} shortest path problems, respectively seek a shortest path that uses no more than $k$ colors and one that makes no more than $k-1$ alternations of colors. We show that the former problem is NP-hard, when the latter is tractable. The study of these problems is motivated by some limitations in the use of diameter-based metrics to evaluate the topological structure of transit networks. We notably show that indicators such as the diameter or directness of a transit network fail to adequately account for travel convenience in measuring the connectivity of a network and propose a new network indicator, based on solving the \textit{$k$-interchange-constrained} shortest path problem, that aims at alleviating these limitations. \\

\textbf{Keywords:} Graph Theory, Shortest Path Problem, Computational Complexity, Transit Networks, Network Indicators.

% \PACS{PACS code1 \and PACS code2 \and more}
% \subclass{MSC code1 \and MSC code2 \and more}
\end{abstract}
\end{frontmatter}

\section*{Acknowledgement}
The author would like to thank Dr. Yuval Filmus for his helpful advice, as well as two anonymous reviewers for their valuable feedback that greatly improved this paper. 
\section{Introduction and Related Work}
Cities and metropolitan areas, are characterized by a constant growth and evolution of their demographics and economies. Thus, transit networks should be seen as constantly expanding objects and transportation planning as a highly strategic, and dynamic decision process that requires taking multiple factors into account \cite{Chile}. A user-based perspective is often favored by planners \cite{Guihaire}, through the optimization of aspects such as passenger flows \cite{Yu}, demographic coverage \cite{Kansky}, travel time \cite{Zhao}, or waiting time \cite{Wong}. This user-based perspective is typically adjusted with cost \cite{Chile2}, or robustness considerations \cite{Cadarso} as well as through the assessment of the impact of potential network designs on private vehicle traffic \cite{Traf}, the environment \cite{Env} and their integration with other modes of public transportation \cite{Peng}. These traditional planning factors can be supplemented by more descriptive indicators reflecting the topological properties of transit networks. This emerging, relatively under-utilized approach to transport planning \cite{der}, whose development has closely followed advances in network science \cite{Network} and geographic information science \cite{GIS} can aid transportation planners in reaching a better understanding of existing network layouts and anticipating their expansion. Three aspects of a transportation network can be evaluated by these indicators, according to the typology of \cite{der}; \textit{state}, \textit{form}, and \textit{structure}. Our contributions in this paper focus on the evaluation of the latter aspect, \textit{structure}, that is the intrinsic topological properties of existing networks. Among other classical network indicators \cite{Kansky} that are commonly used in this type of analysis, the diameter $d$ of a network is defined as the length of its longest shortest path. The smaller its diameter, the better linked \cite{geo} the network. Better networks, from a passenger's perspective, would be expected to be long (i.e. the sum of lengths of their edges should be large), while having a small diameter to reduce travel time. The network extension $\Pi$, defined as the ratio between a network's total length and its diameter, reflects this preference and is a measure of a network's spread \cite{app}. The higher the value of $\Pi$, the more widespread a network tends to be.\\ 

In this work, we argue that a limitation of both of these indicators is that they do not account for a network's division in transit lines, which can be a defining aspect for travel convenience. Indeed, for a network of a certain diameter and total length, a passenger would obviously prefer to be able to cover these two distances while making a smaller number of line interchanges. Thus there can exist important disparities in terms of travel convenience in networks presenting equal values for these two classical indicators. \\

Derrible and Kennedy \cite{der} have recently made a similar point and introduced a new network indicator known as directness, noted $\tau$, which is a measure of the "ease of travel" within a network to avoid unnecessary interchanges, and is defined as the ratio of the number of lines to the maximum number of transfers to travel between the two most distant nodes of the network. We show however that this indicator is an insufficient measure of the convenience of transit over a network, with counter-examples presented in Section \ref{4}.\\
 
Section \ref{a} introduces two new network indicators, namely the color-constrained and the interchange-constrained diameters of a graph (respectively denoted $dc$ and $di$), which both correct some inaccuracies resulting from the use of the previously mentioned classical network indicators to evaluate transit networks. We show that for a generic graph, the calculation of $dc$ corresponds to solving an NP-hard extension of the shortest path problem, when $di$ can be computed efficiently by solving a tractable such extension. This section additionally shows that the latter indicator is opportunely more adapted than the former to transit networks. Section \ref{d} concludes this paper with preliminary results stemming from an ongoing application of
the proposed network indicator to real transit networks.

\section{Problem Statement}
\label{a}
Let $G=(V, E)$ be an undirected connected graph, in which $V$ is a set of $n$ nodes, and $E$ a set of $m$ edges. Each edge $e \in E$ possesses two attributes $l(e) \in L$ and $c(e) \in C$, such that $l(e)$ denotes the length of edge $e$ (e.g. as a unitary connection or an actual length in kilometers), and $c(e)$ its color or label.\\

Graph $G$ can be used to model the topological structure of a multiple-line, unimodal transit network, by associating a node $v \in V$ to each station and an edge $e \in E$ to each direct connection between two stations, the color $c(e)$ of which denotes the transit line it belongs to. It should be noted that the term "line" here, is used in its widest sense and could include circular sequences of transit stations (e.g. the "Circle Line" of the London Underground network or the "Line 10 loop" of the Beijing Subway network). These types of circular structures would simply result in monochromatic cycles \cite{erdos} in $G$ . In the remainder of this work, we shall consider, with no loss of generality, that all edges are of length $l(e)=1$. That is to say that we study transit networks from a topological, graph-theoretic perspective, rather than a geographic one \cite{gattuso}. Each concept introduced in this work can nevertheless be trivially extended to the general case, where we account for the actual distances between stations, rather than measuring distances in terms of the number of stations traveled.

\begin{definition}[$\mathcal{PI}_k$: the $k$-interchange-constrained shortest path problem]
For a given integer $k=1, \dots, n-2$ and two nodes $s, t \in V$, the $k$-interchange-constrained shortest path problem consists in determining the shortest Path between $s$ and $t$, with the additional constraint that the number of alternations of colors in the sequence of edges defining a path cannot exceed $k-1$. We call feasible solutions to this problem  $k$-interchange-constrained paths.
\end{definition}

\begin{definition}[$di_k$: the $k$-interchange-constrained diameter of a graph]
For a given integer $k=1, \dots, n-2$, we define the $k$-color-constrained diameter of graph $G$ as the maximum length (with respect to $L$) among all pairs of nodes in $V$, of a shortest $k$-interchange-constrained path linking a pair of nodes. We set the notation $di=di_{n-2}$.\\
\end{definition}

\begin{definition}[$\mathcal{PC}_k$: the $k$-color-constrained shortest path problem]
For a given integer $k=1, \dots, |C|$ and two nodes $s, t \in V$, the $k$-color-constrained shortest path problem consists in determining the shortest Path between $s$ and $t$, with the additional constraint that the number of different colors appearing in the sequence of edges defining a path cannot exceed $k$. We call feasible solutions to this problem  $k$-color-constrained paths.
\end{definition}

%Little literature review on the constrained shortest path problem and the one about formal languages

\begin{definition}[$dc_k$: the $k$-color-constrained diameter of a graph]
For a given integer $k=1, \dots, |C|$, we define the $k$-color-constrained diameter of graph $G$ as the maximum length (with respect to $L$) among all pairs of nodes in $V$, of the shortest $k$-color-constrained path linking a pair of nodes. We set the notation $dc=dc_{|C|}$.\\
\end{definition}

If $G=(V, E)$ represents a multiple-line unimodal transit network, the $k$-color-constrained diameter of $G$ is the maximum number of stations or distance, that a passenger can travel while using a shortest path of no more than $k$ transit lines, while the $k$-interchange-constrained diameter of $G$ is the maximum number of stations or distance, that a passenger can travel while using a shortest path making no more than $k-1$ interchanges and thus also using no more than $k$ transit lines. \\

An important relation between problems $\mathcal{PC}_k$ and $\mathcal{PI}_k$ to note at this point is that, for a given $k=1, \dots, |C|$ the domain of feasible solutions of $\mathcal{PC}_k$ is a restriction of that of $\mathcal{PI}_k$. Indeed, making a maximum of $k-1$ interchanges implies using at most $k$ colors in a path, but the converse implication is not always true. For instance, a path over which the sequence of colors would be $red-black-red-black$ can be feasible for some instances of $\mathcal{PC}_2$ but it is not feasible for any instance of $\mathcal{PI}_2$, in other words this path is a $2$-color-constrained path, but it is not a $2$-interchange-constrained path, and any $k$-interchange-constrained path is a $k$-color-constrained path. In the perspective of accounting for the convenience of travel when measuring the diameter of a transit network, it is more appropriate to limit the number of interchanges, rather than the number of lines. Indeed, the former approach is a direct reflection of the concerns of passenger, when the latter may give consideration to highly inconvenient paths using a limited number of lines (as in the $red-black-red-black$ example).\\

Problems $\mathcal{PC}_k$ and $\mathcal{PI}_k$ lie at the confluence of two lines of research on two variants of the shortest path problem, without them being specifically treated in the literature, to the best of our knowledge.  Resource constrained shortest path problems \cite{resource} are known to be NP-hard, but neither the constraint in problem $\mathcal{PC}_k$ nor that in problem $\mathcal{PI}_k$ can be expressed as a constraint on an additive resource. On the other hand, research on formal language constrained-path problems \cite{formal} is specifically interested in determining shortest paths in valued, labeled graphs, with the additional constraint that the sequence of labels formed by the edges of a path belongs to a given formal language. An important result in this context is that the problem is solvable efficiently in polynomial time when the language in question is restricted to be a context-free language and NP-hard, when restricted to fixed simple regular language. However, this result does not cover neither problem $\mathcal{PC}_k$ nor problem $\mathcal{PI}_k$, since a constraint on the number of labels, or the number of alternations of labels appearing in a path does not define a regular language, as it cannot be recognized by a finite automaton.

For $k=1$, the $1$-color-constrained and the $1$-interchange-constrained diameters of a transit network $G$ are equal and they do not trivially correspond to the length of the longest transit line, as there can be a shorter way to join the two extremities of a line than going through the whole line, namely if a shorter portion of another line connects these two extremities.

These two indicators rather correspond to the length of the longest portion of a line that constitutes a shortest path using one color, which in and of itself can be an interesting indicator of how well connected the network is, when calculated for each line.\\

\section{Computational complexity}
\label{b}
\label{preuve}
\begin{proposition}
The $k$-color-constrained shortest path problem $\mathcal{PC}_k$ is NP-hard.
\end{proposition}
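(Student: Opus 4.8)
The plan is to establish NP-hardness already for the decision version of $\mathcal{PC}_k$ --- given $G$ with colored edges, an integer $k$, nodes $s,t$, and a bound $\ell$, decide whether some $s$--$t$ path of length at most $\ell$ uses at most $k$ distinct colors --- by a polynomial reduction from the \emph{set cover} problem, which is NP-complete. Recall an instance of set cover: a universe $U=\{1,\dots,p\}$, a family $S_1,\dots,S_q$ of subsets of $U$, and an integer $k$; the question is whether some subfamily of at most $k$ of the $S_j$ covers $U$.

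From such an instance I would build a colored graph $G$ as follows. Take a chain of ``checkpoint'' nodes $s=v_0,v_1,\dots,v_p=t$, one $v_i$ per element of $U$. For every element $i\in U$ and every set $S_j$ with $i\in S_j$, introduce a private intermediate node $w_{i,j}$ together with the two edges $v_{i-1}w_{i,j}$ and $w_{i,j}v_i$, both assigned color $j$. This yields a simple graph with $p+1+\sum_j|S_j|$ vertices, hence of polynomial size, whose only degrees of freedom are, at each step $i$, the choice of which monochromatic length-$2$ detour to take --- equivalently, which set containing $i$ to ``use''. I then set the length bound to $\ell=2p$ and keep the color budget equal to $k$; all this is computable in polynomial time.

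The correctness argument is the routine part. For $1\le i\le p-1$, deleting $v_i$ separates $s$ from $t$, so every $s$--$t$ path visits $v_0,v_1,\dots,v_p$ in this order and, being simple, uses exactly one of the monochromatic detours in each gadget $i$; consequently every $s$--$t$ path has length exactly $2p$, the bound $\ell=2p$ is always met, and the entire difficulty resides in the color budget. The colors appearing on such a path are precisely the $p$ sets selected, one per element, and this subfamily covers $U$ because the detour chosen in gadget $i$ corresponds to a set containing $i$; conversely, a cover $\{S_{j_1},\dots,S_{j_r}\}$ with $r\le k$ yields an $s$--$t$ path that in gadget $i$ takes the detour of some $S_{j_a}\ni i$, thus using at most $k$ colors. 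Hence the set cover instance is a \emph{yes}-instance if and only if $\mathcal{PC}_k$ admits a feasible solution (necessarily of length $2p$), so a polynomial-time algorithm for $\mathcal{PC}_k$ would decide set cover in polynomial time.

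I do not expect a genuine obstacle here; the crux is the gadget design above, and the remaining effort is the bookkeeping that makes the reduction watertight --- verifying that the detour vertices $w_{i,j}$ really force the ``one set per element'' correspondence, that the cut-vertex property forbids skipping or revisiting gadgets, and that the graph is simple and polynomially sized. The one point worth flagging is the exact reading of the statement: the reduction as described shows hardness of $\mathcal{PC}_k$ with $k$ supplied as part of the input (which is the relevant regime, since for $k=|C|$ the color constraint is vacuous and $dc_{|C|}$ is just the ordinary diameter), and I would note this explicitly rather than claim hardness for every fixed $k$.
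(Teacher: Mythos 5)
Your reduction is correct, but it is genuinely different from the one in the paper: the paper reduces from boolean satisfiability, building a chain of variable nodes joined by parallel edges labelled $x_i$ and $\lnot x_i$ (so that a simple path selects a truth assignment) followed by a chain of clause nodes whose parallel edges are labelled by the literals of each clause (so that the colour budget $k=n$ forces every clause to reuse a selected literal), whereas you reduce from set cover, encoding each element of the universe as a gadget of monochromatic length-$2$ detours whose colours name the sets containing that element. Both reductions share the same underlying mechanism --- a chain of forced checkpoints where the only freedom is which parallel, colour-carrying connection to take, with the colour budget doing all the work and the length bound vacuous --- and both require $k$ to be part of the input, a caveat you rightly make explicit and the paper leaves implicit. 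Your version has two small advantages: the subdivision vertices $w_{i,j}$ keep the graph simple (the paper's construction is a multigraph), and since every $s$--$t$ path in your graph has length exactly $2p$, you show that even deciding the \emph{existence} of a $k$-colour-constrained path is NP-hard, which is formally a stronger statement. The paper's version has the advantage of starting from SAT, and its clause half illustrates more vividly how colour reuse can encode a logical condition. Your correctness argument (cut vertices forcing the gadget order, simplicity forcing one detour per gadget, the colour set of a path being exactly a covering subfamily) is complete as stated; no gap.
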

\begin{proof}
We reduce an instance of the boolean satisfiability problem \cite{cook} to an instance of the decision version of the $k$-color-constrained shortest path problem. Given $F(x_1, x_2,\dots,x_n) = C_1 \land \cdots \land C_m$, a boolean formula in Conjunctive Normal Form, on the variables $x_1,\ldots,x_n$, we construct a graph $G$ containing a node for each variable $x_i, i=1,\dots,n$ and for each clause $C_j, j=1, \dots, m$ and an additional source node $s$. Formally,  $G=(V,E)$, where $V=\{s\}\cup \{x_1,\dots,x_n\}\cup \{C_1,\dots, C_n\}$, and the edges of $E$ are the following:
\begin{itemize}
\item There exist two edges between $s$ and $x_1$, labeled $x_1$ and $\lnot x_1$.
\item There exist two edges between $x_{i-1}$ and $x_i$, labeled $x_i$ and $\lnot x_i$, $\forall i=2,\dots, n$.
\item There exists one edge linking $x_n$ and $C_1$, for each literal in $C_1$, and labeled as that literal.
\item There exists one edge linking $C_{j-1}$ and $C_j$, for each literal in $C_j$, and labeled as that literal, $\forall j=2, \dots, m$.
\end{itemize}

Thus there exist $2n$ labels, one per literal in $F(x_1, x_2,\dots,x_n)$, and we set $k= n$. 
It can be easily seen that $F(x_1, x_2,\dots,x_n)$ is satisfiable if and only if there exists a path from $s$ to $C_m$ which uses at most $n$ labels. Indeed, if such a path exists, it would be of length $2n$ and use exactly $n$ labels all appearing in the first half of the path, going from $s$ to $x_n$. These labels correspond to the literals (each being a variable $x_i$ or its negation $\lnot x_i, i=1,\dots n$) that would take value $1$ in a solution to the boolean satisfiability problem. The existence of the second half of the path, going from $x_n$ to $C_m$ ensures that this assignment of boolean values to the literals satisfies each clause $C_j, j=1, \dots, m$ and therefore that $F(x_1, x_2,\dots,x_n)$ is satisfiable. Conversely, if $F(x_1, x_2,\dots,x_n)$ is satisfiable, then there exists a set of $n$ literals which, when assigned value $1$ satisfy each clause $C_j, j=1, \dots, m$, a path from $s$ to $C_m$ which uses exactly $n$ labels can therefore be built in the graph by selecting a set of $2n$ edges labeled with these $n$ literals. 
\end{proof}

\begin{proposition}
The $k$-interchange-constrained shortest path problem $\mathcal{PC}_k$ is tractable.
\end{proposition}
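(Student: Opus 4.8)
The plan is to show that the problem named in the statement, namely the $k$-interchange-constrained shortest path problem $\mathcal{PI}_k$ (the variant that bounds the number of colour \emph{alternations} by $k-1$), reduces in polynomial time to an ordinary shortest path computation on an enlarged \emph{state graph}, on which breadth-first search runs in polynomial time. The crucial modelling observation is that, to decide whether appending one edge to a partial path creates a new interchange, one need not remember \emph{which} colours have already been used; it suffices to remember the colour of the \emph{last} edge traversed together with the number of interchanges incurred so far. This is a bounded, polynomially-sized amount of state.

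Concretely, first I would build a directed graph $\widehat{G}$ whose vertices are triples $(v,c,j)$ with $v\in V$, $c\in C$ a colour, and $j\in\{0,\dots,k-1\}$ an interchange counter. For every edge $\{v,w\}\in E$ of colour $c'$ I would add, from each state $(v,c,j)$, an arc of unit length to $(w,c',j')$, where $j'=j$ if $c'=c$ and $j'=j+1$ otherwise, retaining the arc only when $j'\le k-1$. The source states are $(w,c',0)$ for every edge $\{s,w\}$ of colour $c'$, and the accepting states are all $(t,\cdot,\cdot)$. A shortest path from a source state to an accepting state in $\widehat{G}$, found by BFS because every arc has length $1$ (or by Dijkstra for the general positive-length case, matching the paper's remark that the construction extends to actual distances), then projects onto a shortest $k$-interchange-constrained path from $s$ to $t$. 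Since $k\le n-2$, the graph $\widehat{G}$ has $O(n\,|C|\,k)$ vertices and $O(m\,|C|\,k)$ arcs, so a single query is polynomial; repeating over all $O(n^2)$ pairs yields $di_k$ in polynomial time.

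The main obstacle is correctness, which I would split into two parts. The first part is to show that $(v,c,j)$ is a sufficient (Markovian) summary of a partial path: the admissibility of any future extension and its incremental interchange cost depend on the history only through the last colour and the running counter, which follows directly from the definition of an alternation. The second, subtler part is to reconcile the fact that BFS on $\widehat{G}$ minimises over \emph{walks}, whereas a genuine path must be simple. Here I would argue that loop removal never increases the interchange count: if a walk revisits a vertex, excising the intervening closed subwalk replaces a colour subsequence $a,c_1,\dots,c_r,b$ by $a,b$, and because the discrete ``distinct-colour'' indicator $d(x,y)=[x\ne y]$ satisfies the triangle inequality, $d(a,b)\le d(a,c_1)+\dots+d(c_r,b)$; hence the spliced walk has no more interchanges and strictly smaller length. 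Consequently the shortest $k$-interchange-constrained walk coincides with the shortest $k$-interchange-constrained simple path, which legitimises computing the answer on $\widehat{G}$ and completes the tractability argument.
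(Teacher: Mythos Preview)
Your proposal is correct and follows essentially the same layered/product-graph reduction as the paper: both encode the state (current vertex, last colour used, number of interchanges so far) and reduce $\mathcal{PI}_k$ to an ordinary shortest-path computation on a polynomial-size auxiliary graph, the paper contracting maximal monochromatic segments into single arcs where you work edge by edge. Your treatment is in fact more careful than the paper's, since you explicitly justify---via the triangle inequality for the colour-change indicator---why minimising over walks rather than simple paths is harmless, a point the paper leaves implicit.
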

From a Dynamic Programming perspective, one can see that the shortest $k$-interchange-constrained path problem possesses an optimal sub-structure. Indeed, let $P_{st}^k=s  x_1 x_2  \dots x_{k-2} x_{k-1} t$ be a shortest $k$-interchange-constrained path\footnote{Note that in transit networks, intermediate nodes $x_i, i=1, \dots, k-1$ would necessarily be interchange nodes.} of length $L(P_{st}^k)$, between two nodes $s$ and $t$ in a graph $G$.\\

The functional equation defining problem $\mathcal{PC}_k$ can be broken down as follows:
\begin{itemize}
\item $L(P_{ss}^k)=0$
\item $L(P_{st}^k)=\min\limits_{i\in V-\{s,t\}} \{L(P_{si}^{k-1}) + L(P_{it}^{1}) ) \}$
\end{itemize}
Where $L(P_{si}^{1})$ is the length of a path between $s$ and one of the interchange nodes $x_i$ that is directly accessible from $s$ with edges of one color (i.e. without any prior interchange). In other words, the shortest $k$-interchange-constrained path between $s$ and $t$ contains the shortest path between its first visited interchange node $x_1$ and $t$ using at most $k-1$ interchanges.\\

The previous dynamic programming reasoning can obviously not be used for the NP-hard $\mathcal{PC}_k$ problem, the intuitive reason being that in this problem, any portion of the shortest $k$-color constraint path between $s$ and $t$ can contain up to $k$ different colors and is not necessarily the shortest sub-path to verify this property. \\

The tractability of the $k$-color-constrained shortest path problem is demonstrated in graph-theoretic terms, in the following proof.

\begin{proof}
We reduce an instance of the decision version of the $k$-color-constrained shortest path problem between $s$ and $t$ to an unconstrained shortest path problem. \\

Given an undirected connected graph $G=(V, E)$, in which each edge $e \in E$ possesses two attributes $l(e) \in L$ and $c(e) \in C$, an integer $k=1, \dots, n-2$ and two nodes $s,t \in V$, we construct a directed connected graph $G'=(V', E')$, containing vertices $s$ and $t$ as well as $X_i, i=1, \dots, k-1$, the set of nodes that can be reached from $s$ by a path making exactly $i-1$ alternations of colors. In transit networks this set can be restricted to interchange nodes that can be reached from $s$ by a path making exactly $k-1$ interchanges and in which arcs in $E'$ are only valued by a length $l'(e)$. Formally, we set:
\begin{itemize}
\item $V'=\bigcup \limits_{i=1}^{k-1} X_i \cup \{s,t\}$, where $X_1=\{[x, c]: x \in V, c\in C, \exists P \subseteq E, \mbox{ a path between } s \mbox{ and } x \mbox{ in } G$ $\mbox{ and } \forall e \in P, c(e)=c\}$. An arc $e' \in E'$, from $s$ to $ [x, c]$ is created for each path of the form $P$ in the previous definition of $X_1$. The length of this arc would correspond to the length of path $P$, that is the number of edges of color $c$ in this path, i.e. $l'(e')=\sum \limits _{e \in P} l(e)$, where $ P \subseteq E, \mbox{ is a path between } s \mbox{ and } x \mbox{ in } G \mbox{ and } \forall e \in P, c(e)=c$.
\item $\forall i=2, \dots, k-1, X_i=\{[x, c]: x \in V, c\in C, \exists [x', c'] \in X_{i-1}, \exists P \subseteq E, \mbox{ a path between } x' \mbox{ and } x \mbox{ in } G \mbox{ and }   c(e)=c\neq c', \forall e \in P\} $. 
\end{itemize}

The construction and size of $G'$ are polynomial. Indeed, each set $X_i, i=1, \dots, k-1$ contains a maximum of $|C|\times (n-2)$ nodes and the maximum value of $k$ is $n-2$. Thus there exist at most $|C|\times (n-2)^2 +2$ nodes in graph $G'$, and the number of edges of this graph is asymptotically bounded by $O(n^3)$, since there are $k$ levels ($k$ being bounded by $n-2$) and a maximum of $(n-2)(n-3)$ links between two levels. \\ 

Let $P_{st}^k$ be a $k$-interchange-constrained shortest path between $s$ and $t$, in $G$. We show that $P_{st}^k$ corresponds to a shortest path $P'_{st}$ from $s$ to $t$ in $G'$. In terms of the sequence of colors appearing in $P_{st}^k$, this path can be characterized as $P_{st}^k=s \xrightarrow{c_1} x_1 \xrightarrow{c_2} x_2 \xrightarrow{c_3} \dots \xrightarrow{c_{k-2}} x_{k-2} \xrightarrow{c_{k-1}} x_{k-1} \xrightarrow{c_p} t $, where intermediate nodes $x_i, i=1, \dots, k-1$ represent nodes at which an alternation of color in path $P_{st}^k$, and the symbol $\xrightarrow{c}$ therefore indicates a maximal sub-path of path $P_{st}^k$ in which all edges are of color $c$. Thus, there exists a path $P'_{st}=s x_1 x_2 \dots x_{k-2} x_{k-1} t$ in $G'$ which corresponds to $P_{st}^k$. It is easy to see that $P'_{st}$ is a shortest path from $s$ to $t$ in $G'$. Indeed, the existence of another, shorter path from $s$ to $t$ in $G'$ and of a corresponding path to it in $G$ would contradict the optimality of $P_{st}^k$ in $G$. Conversely, any shortest path from $s$ to $t$ in $G'$ would induce a corresponding $k$-interchange-constrained shortest path between $s$ and $t$ in $G$, using the same reasoning.
\end{proof}

\section{Some Limitations of Classical Network Indicators}
\label{4}

For a given $k$, $dc_k$ and $di_k$ define two mathematical sequences on $k$. n this section, we illustrate that both the values of these sequences as well as their variations for different values of $k$ are indicative of a network's topological development over time, if we assume this development to be a series of decisions consisting in creating new transit lines or extending existing ones. Moreover, we show through examples that classical network indicator fail to capture this information. \\

We argue that, \textit{ceteris paribus}, the $k$-interchange-constrained diameter mathematical sequence is indicative of the appropriateness of successive additions of lines, that is the quality of past decisions regarding the trade-off \cite{jubilee} between extending existing lines (which would possibly increase the value of the diameter of the graph), versus creating new lines  (which would possibly result in a significant increase of the value of sequence $di_k$ for two consecutive values of $k$, i.e. before and after the line addition). Ideally, new lines would add spread, without extending the diameter.\\

In other words and all things equal, better networks should present, a small variability in their $k$-interchange-constrained diameters for increasing values of $k$. Indeed, small and stable values of $di_k$, for successive values of $k$ indicate a good compromise between the two conflicting feats of not extending existing lines too much, thus extending passenger travel time (which would result in a large value of the $k$-interchange-constrained diameter for some $k$), and only adding new lines when strictly necessary to cover new territory (which would result in an important increase in the value of the $k$-interchange-constrained diameter, for two consecutive values of $k$). Conversely, a significant increase of the $k$-interchange-constrained diameter for two consecutive values of $k$ may indicate the unnecessary addition of a new line to the network, at some point of the decision-making process, when extending an existing line may have been \textit{theoretically} more indicated. The term \textit{theoretically} is highlighted in the previous sentence, because a significant increase in the value of $di_k$, brought by adding a line, may not be the result of poor decision-making and may be simply explained by technical/engineering difficulties to extend lines \cite{jubilee}, or the commercial necessity to create new lines with a special pricing system (e.g. transit connections to airports such as Bangkok's \textit{Airport Link} and Paris' \textit{OrlyVal}). However, from a passenger's point of view, large increases of $di_k$ for successive values of $k$ remain undesirable, as they mean that the coverage of new territory during the development of the network comes with the inconvenience of having to make more transfers.    \\
\begin{figure}
\begin{center}

\includegraphics[width=\linewidth]{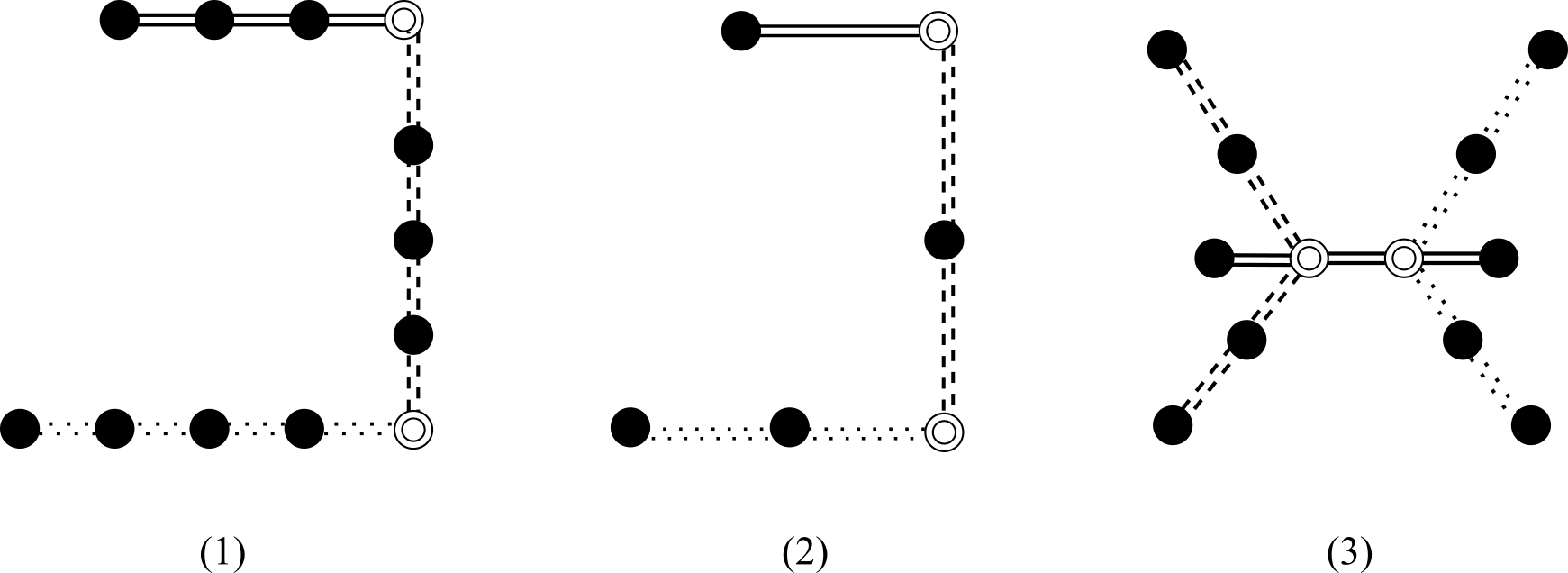}
\caption{\label{toy} First generic example} 
\end{center}
\end{figure}

Figure \ref{toy} further illustrates the intuition behind this indicator on a generic example. It represents three transit networks, denoted $(1)$, $(2)$ and $(3)$, each of which is constituted of three transit lines (distinguished by the representation of their edges as plain, dashed or dotted double-lines) and two transfer stations (represented as empty white nodes), the full black nodes representing non-transfer stations. Network $(3)$ possesses the same number of stations (equal to 12) as network $(1)$ and the same diameter (equal to 5) as network $(2)$. Table \ref{toytable} presents the successive values of $di_k$ as well as its variance, for each of these three networks. Networks $(1)$ and $(2)$ are simple chains. All their stations could have been covered by a single transit line, although it may have been made impossible to do so because of the technical difficulty of extending a line, or by commercial necessities, as previously stated. It remains that, from a purely topological perspective, the use of three different transit lines on these networks was unnecessary. In network $(3)$ however, and assuming a line-by-line development of the network, the definition of three different transit lines was an absolute topological necessity. This network also has a better spread than network $(1)$, for an equal length/number of stations covered and a smaller diameter. Additionally, it covers more stations than network $(2)$, for an equal diameter (i.e. longest possible trip). These facts are reflected by the relatively smaller variance of its $k$-interchange-constrained diameter, as presented in Table \ref{toytable}.

\begin{table}[!ht]

     \begin{center}

   \begin{tabular}{ *{5}{|c}|} 
   \hline
   \backslashbox{Network}{$k$} & $1$ & $2$ & $3$ & Variance\\
   \hline \hline
$(1)$& 4&8&11&\textbf{8.22}\\\hline
$(2)$& 2&4&5&\textbf{1.55}\\\hline
$(3)$& 4&5&5&\textbf{0.22}\\\hline
\end{tabular}
          \end{center}
           \caption{\label{toytable} $k$-interchange-constrained diameters of the illustrative networks in the first generic example}
\end{table}

The second generic example shows that the $k$-interchange-constrained diameter contains an information about network, that is absent in their $\Pi$ and $\tau$ indicators. The two networks presented in Figure \ref{toy2}, which uses the same graphical nomenclature as Figure \ref{toy}, are both constituted of three transit lines and are of equal diameters $d=3$, equal total lengths $|E|=5$ and thus present the same value of network extension $\Pi=\frac{5}{3}=1.66$. Additionally, in both these network, the maximum number of transfers to cover the diameters of the networks equals $2$. Thus, they present the same value of directness $\tau=\frac{3}{2}=1.5$.\\

Although, networks $(4)$ and $(5)$ are topologically identical, network $(5)$ is better from a passenger's perspective, as traveling the whole length of its diameter can be done using a single transit line (but also with one or two interchanges), whereas it cannot be done without at least two interchanges in network $(4)$. This is reflected by the fact that network $(5)$ has a constant $k$-interchange-constrained diameter (i.e. line additions add spread without making the diameter longer), when the $k$-interchange-constrained diameter of network $(5)$ presents a variance of $0.22$.\\

\begin{figure}
\begin{center}
\includegraphics[width=.85\linewidth]{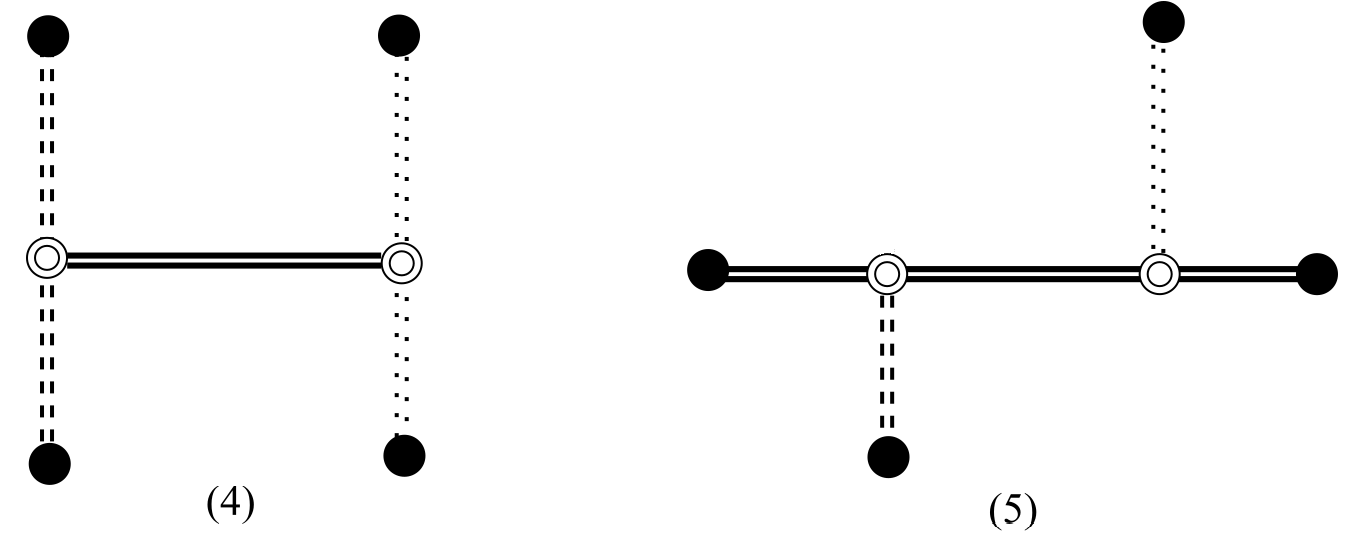}
\caption{\label{toy2} Second generic example} 
\end{center}
\end{figure}

\section{Conclusion}
\label{d}
This paper studied two variants of the shortest path problem in a valued and edge-colored graphs. Given an integer $k$, the \textit{$k$-color-constrained} and the \textit{$k$-interchange-constrained} shortest path problems, respectively seek a shortest path that uses no more than $k$ colors and one that makes no more than $k-1$ alternations of colors. We have shown that the former problem is NP-hard, when the latter is tractable. The study of these problems was motivated by some limitation in the use of diameter-based metrics as structure indicators for transit networks, namely that they do not account for the number of interchanges of lines, which can be a defining factor for the travel convenience of passengers. \\

Thus, we have proposed a new network indicator, the $k$-interchange-constrained diameter of a graph, whose value accounts for travel convenience when measuring the diameter of a graph. Moreover, the stability of this indicator for different values of $k$ is indicative of the effects of successive transit line extensions and new lines creations on improving connectivity, by increasing a network's spread without significantly increasing its diameter. \\

Preliminary analyses we have conducted on three real transit networks, which will be published fully in a future more application-oriented paper, tend to confirm an intuition we have that mature, well connected transit networks (e.g. Paris, Moscow) present rather stable values of their k-interchange-constrained diameter, for variations of $k$, when a more recently established, still rapidly-expanding transit network such as the BTS and MRT network of the Bangkok Metropolitan Area present relatively important fluctuations for this indicator, with successive values of $k$. Thus variations of the $k$-interchange-constrained diameter of a transit network seem to, unsurprisingly, be a reflection of the stage of development it finds itself at.

%% Authors are advised to submit their bibtex database files. They are
%% requested to list a bibtex style file in the manuscript if they do
%% not want to use model1-num-names.bst.

\bibliographystyle{model1-num-names}

\end{document}